\newtheorem{problem}{Problem}
\newtheorem{definition}{Definition}
\newtheorem{lemma}{Lemma}
\newtheorem{corollary}{Corollary}
\newtheorem{proposition}{Proposition}
\title{\LARGE \bf Detection of Hidden Attacks on Cyber-Physical Systems from Serial Magnitude and Sign Randomness Inconsistencies
}
\author{Paul J Bonczek and Nicola Bezzo
\thanks{Paul J Bonczek and Nicola Bezzo are with the Charles L. Brown Department of Electrical and Computer Engineering, and Link Lab, University of Virginia, Charlottesville, VA 22904, USA. Email: {\tt \{pjb4xn, nb6be\}@virginia.edu}}
}
\newcommand*{\N}{\mathbb{N}}
\newcommand*{\R}{\mathbb{R}}
\newcommand*{\E}{\mathbb{E}}
\newcommand*{\PP}{\mathbb{P}}
\begin{document}

\maketitle
\thispagestyle{empty}
\pagestyle{empty}

\begin{abstract}

Stealthy false data injection attacks on cyber-physical systems (CPSs) introduce erroneous measurement information to on-board sensors with the purpose to degrade system performance. An intelligent attacker is able to leverage knowledge of the system model and noise characteristics to alter sensor measurements while remaining undetected. To achieve this objective, the stealthy attack sequence is designed such that the detector performs  similarly in the attacked and attack-free cases. Consequently, an attacker that wants to remain hidden will leave behind traces of inconsistent behavior, contradicting the system model. To deal with this problem, we propose a runtime monitor to find these inconsistencies in sensor measurements by monitoring for \textit{serial inconsistencies} of the detection test measure. Specifically, we employ the chi-square fault detection procedure to monitor the magnitude and signed sequence of its chi-square test measure. We validate our approach with simulations on an unmanned ground vehicle (UGV) under stealthy attacks and compare the detection performance with various state-of-the-art anomaly detectors.

\end{abstract}
\section{Introduction} \label{sec:introduction}

Modern cyber-physical systems (CPSs) have been the targets of malicious cyber-attacks due to their growing unsupervised, autonomous capabilities and many entry points to implement an attack. Their expanded complexities supported by an increased number of sensors and computers allow for autonomous capabilities in navigation, warehouse logistics, surveillance, warfare, and industrial operations. With a growing number of vulnerable access points for attackers on increasingly impactful systems in our society, it is crucial to provide tighter security measures to ensure proper performance and safety.

An intelligent attacker is able to implement a malicious attack sequence to manipulate the system of interest, all while remaining undetected. The execution of such a stealthy attack allows the intelligent attacker to degrade system performance and potentially cause damage to the unknowingly compromised system. Previously demonstrated attacks of this nature include cases like: the GPS spoofing of a vessel \cite{YachtSpoof}, different sensor and communication attacks on vehicle technologies \cite{miller2015remote}, and the infamous Stuxnet attack \cite{Stuxnet}.

In order to repel these stealthy attacks, detection algorithms are designed to find compromised system's components to maintain safe operation. Intelligent attackers, in turn, resort to new methods to hide and deceive on-board control systems and their anomaly detection counterparts. While such attack vectors are less effective since the attacker needs to maintain a low profile, performance degradation can be still accomplished if the attack is able to remain undetected.

We note, however, that in general an attacker needs to create inconsistent behavior with respect to the known model in order to hijack a system. In this work, we consider the chi-square detection scheme \cite{BadData} that generates a scalar quadratic {\em test measure} for attack detection. This test measure is extracted from the sum of squares of the residual vector --- defined as the vector of differences between sensor measurements and the state prediction. To detect inconsistencies, we monitor the serial behavior of the test measure difference; specifically, we observe the characteristics of consecutive test measures throughout a sequence of measurement data and compare them to an expectation extracted from prior knowledge about the system model. Our proposed Serial Detector is then designed to generate an alarm rate at runtime for detection purposes to discover inconsistent magnitude and sign behavior due to deceptive sensor attacks.

The main objective of this work is to find intelligent sensor attack sequences that deliberately attempt to remain hidden from conventional detection techniques in noisy dynamical systems. The contribution of this paper is twofold: 1) we propose the Serial Detector to monitor inconsistent magnitude and sign behavior of the test measure difference within a system employing a chi-square fault detection procedure, and 2) we characterize a worst-case scenario that an attacker can exploit to remain undetected from our proposed detector.

\subsection{Related Work}
\label{sec:Related Work}

The field of CPS security has garnered much interest in the robotics, controls, and computer science communities to protect critical systems. In recent literature, several procedures that analyze components of the residual in the control system feedback, namely the $\chi^2$ test measure \cite{BadData}, for attack detection have also been exploited. For example, the model-based Cumulative Sum (CUSUM) procedure proposed in \cite{CUSUM_Journal} leverages the known noise characteristics of the system model and sequentially sums the test measure to detect changes within its distribution. In \cite{CST2}, authors included a coding matrix to the sensor outputs, unknown to attackers, to detect stealthy attacks through an iterative optimization algorithm for solving a transformation matrix. Other similar detection procedures, such as in \cite{watermarking}, leverage watermarking of the control inputs to discover stealthy attacks. 

Our recent works on attack detectors that monitor for non-random residual behavior have enabled the ability to find previously undetectable attacks when compared to conventional detection procedures. In \cite{Paul_ACC}, a windowed detector leveraging the Wilcoxon-Signed Rank \cite{Wilcoxon1} and Serial Independence Runs \cite{serial_test} tests was proposed to find non-random patterns over a sequence of sensor data. Similarly, in \cite{Paul_IFAC} we characterized the Cumulative Sign (CUSIGN) detector with the purpose of finding non-random signed residual behavior by checking for changes in probability of the signed values. In this paper, we expand on these previous works by developing a runtime monitor for both non-random magnitude and sign behaviors, further strengthening detection capabilities.

The remainder of this work is organized as follows. In Section \ref{sec:preliminaries} we begin with system and estimation models along with the problem formulation, followed by the description of our Serial Detection framework in Section \ref{sec:framework}. We provide an attack analysis in Section \ref{sec:Undetected_Attacks} to expose a worst-case scenario for Serial Detection. Finally, in Section \ref{sec:Results} we present numerical simulations to demonstrate the performance of our proposed detector and compare with three state-of-the-art algorithms, before discussing our conclusions in Section \ref{sec:conclusion}.
\section{Preliminaries \& Problem Formulation} \label{sec:preliminaries}

In this work we consider discrete-time linear time-invariant (LTI) systems in the following form:
\begin{equation}\label{eq:system1}
\begin{split}
\bm{x}_{k+1} &= \bm{A} \bm{x}_k + \bm{B} \bm{u}_k + \bm{\nu}_k, \\
	\bm{y}_k&=\bm{C} \bm{x}_k + \bm{\eta}_k ,
\end{split}
\end{equation}
where the state vector $\bm{x}_k \hspace{-1pt} \in \hspace{-.5pt} \R^{n}$, $k \hspace{-1pt} \in \hspace{-.5pt} \N$ evolves due to the discrete-time state transition and input matrices $\bm{A} \in \R^{n\times n}$ and $\bm{B} \in \R^{n\times m}$, control input $\bm{u}_k \in \R^m$, and additive i.i.d. zero-mean Gaussian process uncertainty $\bm{\nu}_k = \mathcal{N}(0,\bm{Q}) \in \R^n$ described by the covariance matrix $\bm{Q} \in \R^{n\times n}, \bm{Q} \geq 0$. The output vector $\bm{y}_k \in \R^{s}$ represents the measured system states with additive i.i.d. zero-mean Gaussian measurement uncertainty $\bm{\eta}_k = \mathcal{N}(0,\bm{R}) \in \R^s$ with covariance matrix $\bm{R} \in \R^{s\times s}, \bm{R} \geq 0$ that provides measurements to $s$ sensors.

We consider sensor measurements $\bm{y}_k$ that can be altered due to an additive attack vector $\bm{\xi}_k \in \R^s$, which results in an attacked output measurement vector described by
\begin{equation} \label{eq:attacked_output}
    \Tilde{\bm{y}}_k = \bm{y}_k + \bm{\xi}_k = \bm{C} \bm{x}_k + \bm{\eta}_k + \bm{\xi}_k \in \R^s.
\end{equation}

During operations, a steady state Kalman Filter with gain matrix $\bm{L} \in \R^{n \times s}$ is implemented to provide a state estimate $\hat{\bm{x}}_k \in \R^n$ in the form:
\begin{equation}\label{eq:Kalman}
\begin{split}
	\hat{\bm{x}}_{k+1} &= \bm{A} \hat{\bm{x}}_k + \bm{B} \bm{u}_k + \bm{L}(\Tilde{\bm{y}}_k - \bm{C}\hat{\bm{x}}_k), \\
	\bm{L} &= \bm{P}\bm{C}^{\mathsf{T}}(\bm{C}\bm{P}\bm{C}^{\mathsf{T}} + \bm{R})^{-1}.
\end{split}
\end{equation}

The gain matrix $\bm{L}$ results in a minimal steady state estimation error covariance matrix $\bm{P} = \E[\bm{e}_k \bm{e}_k^{\mathsf{T}} ]$, where $\bm{e}_k = \bm{x}_k - \hat{\bm{x}}_k$ is the estimation error. The measurement residual vector is defined as
\begin{equation}
\label{eq:Residual}
	\bm{r}_k = \Tilde{\bm{y}}_k - \bm{C}\hat{\bm{x}}_k = \bm{C}\bm{e}_k + \bm{\eta}_k \in \R^s,
\end{equation}
with an expected residual covariance matrix, in attack-free conditions (i.e., $\bm{\xi}_k = \bm{0}$), described by:
\begin{equation}
\label{eq:Residual_Covariance}
	\bm{\Sigma} = \E[\bm{r}_{k}\bm{r}_{k}^{\mathsf{T}}]  = \bm{C}\bm{P}\bm{C}^{\mathsf{T}} + \bm{R} \hspace{2pt} \in \R^{s \times s}.
\end{equation}

For the measurement residual, we test two different hypotheses: The null hypothesis $\mathcal{H}_0$ for nominal scenario (attack-free) and the alternative hypothesis $\mathcal{H}_a$ where attacks are present. Formally, the hypotheses are written as
\begin{equation} \label{eq:residual_hypothesis}
    \mathcal{H}_0: \bigg\{ \hspace{-2pt} \begin{array}{l}
    \begin{aligned}
	\E[\bm{r}_k] &= 0, \\
    \E[\bm{r}_k \bm{r}_k^{\mathsf{T}}] &= \bm{\Sigma},
    \end{aligned}
    \end{array} \;\;\; \mathcal{H}_a: \bigg\{ \hspace{-2pt} \begin{array}{l}
    \begin{aligned}
	\E[\bm{r}_k] &\ne 0, \text{and/or} \\
    \E[\bm{r}_k \bm{r}_k^{\mathsf{T}}] &\ne \bm{\Sigma}.
    \end{aligned}
    \end{array}
\end{equation}

In this work, we consider a single detector to monitor the system for sensor attacks by way of the chi-square detector, which produces a scalar quadratic \textit{test measure} $z_k$ by
\begin{equation} \label{eq:test_measure}
    z_k = \bm{r}_k^{\mathsf{T}} \bm{\Sigma}^{-1} \bm{r}_k \in \R_{\geq 0}.
\end{equation}

In the absence of attacks, the measurement residual is an $s$-dimensional vector of normally distributed random variables $\bm{r}_k \sim \mathcal{N}(0,\bm{\Sigma})$, satisfying the null hypothesis $\mathcal{H}_0$ in \eqref{eq:residual_hypothesis}. The test measure $z_k$ is then expected to be a random variable that follows a chi-square distribution with $s$ degrees of freedom, i.e. $z_k \sim \chi^2(s)$, that follows:
\begin{equation}
    \label{eq:Expected_z}
	\E[z_{k}] = s, \;\;\; \text{Var}[z_{k}] = 2s.
\end{equation}

\subsection{Undetected Attacks}
\label{sec:HiddenAttacks}

A successful attacker is capable of modeling an attack sequence to achieve a desirable effect while remaining undetectable to any on-board fault detection mechanisms. In order to accomplish such stealthy behavior, it is necessary to attain information about critical aspects of the system, such as: acquiring knowledge to the modeled dynamics, sensor measurements, state estimator, and detection procedure(s). To intentionally avoid detection, an intelligent attacker will carefully construct an attack sequence to evade raising any flags. Below we describe a sequence an attacker may take with respect to the Bad-Data detector \cite{BadData} leveraging the chi-square test measure procedure. However, this may be extended to satisfy any detection procedure using a similar concept to avoid detection.

\textit{Zero-alarm attacks} are sequences designed by an attacker that maintains the test measure from exceeding the defined threshold value ($z_k \leq \tau_z$). This class of attack does not trigger an alarm throughout the attack sequence, as the test measure never passes the threshold. In order to satisfy such requirements, an attacker can construct the attack vector by
\begin{equation} \label{eq:zero_alarm_attack}
    \bm{\xi}_k = -\bm{C}\bm{e}_k - \bm{\eta}_k + \bm{\Sigma}^{\frac{1}{2}} \bm{\delta}_k,
\end{equation}
where $\bm{\delta}_k \in \R^s$ is a vector that satisfies $\bm{\delta}_k^{\mathsf{T}}\bm{\delta}_k \leq \tau_z$. With this attack vector designed at a time $k$, the test measure $z_k$ results in
\begin{equation} \label{eq:zero_alarm_attack2}
\begin{split}
    z_k &= (\Tilde{\bm{y}}_k - \bm{C}\hat{\bm{x}}_k)^{\mathsf{T}} \bm{\Sigma}^{-1} (\Tilde{\bm{y}}_k - \bm{C}\hat{\bm{x}}_k) \\
    &= (\bm{C}\bm{e}_k + \bm{\eta}_k + \bm{\xi}_k)^{\mathsf{T}} \bm{\Sigma}^{-1} (\bm{C}\bm{e}_k + \bm{\eta}_k + \bm{\xi}_k) \leq \tau_z,
\end{split}
\end{equation}
that remains within the threshold value to not trigger an alarm. While generating an attack sequence that does not trigger alarms may seem like a favorable attack design, it is necessary to recall that alarms are triggered in a system operating in normal conditions without attacks. If alarms are no longer being triggered as designed for in an attack-free case, then these conditions may raise suspicions of a possible attack. To avoid these alarm rate discrepancies, an attacker would want to design an attack sequence that is undetectable to emulate normal (attack-free) conditions. This class of attack brings us to develop a sequence that exploits the system uncertainties to execute such a malicious attack.

\textit{Hidden attacks} can be defined as designed attack sequences such that alarms are triggered at the same rate as the desired false alarm rate during nominal, attack-free operation. As shown in Fig. \ref{fig:HiddenAttack}, during a hidden attack, a smart attacker can design a sequence where the test measure $z_k$ exceeds the threshold $\tau_z$ at the same rate as nominal conditions. To tune for a desired alarm rate $\alpha$ (in the attack-free scenario) for Bad-Data detection while leveraging the chi-square procedure, the specific threshold $\tau_z$ is found by
\begin{equation} \label{eq:BD_threshold}
    \tau_z = 2\gamma^{-1} \Big( 1 - \alpha, \frac{s}{2} \Big),
\end{equation}
to achieve a desired alarm rate, where $\gamma^{-1}(\cdot,\cdot)$ is the \textit{inverse regularized lower incomplete gamma function} \cite{statsbook}. The vector $\bm{\delta}_k$ from \eqref{eq:zero_alarm_attack} is designed such that
\begin{equation} \label{eq:hidden_attack}
    \PP(z_k > \tau_z) = \PP(\bm{\delta}_k^{\mathsf{T}} \bm{\delta}_k > \tau_z) \approx \alpha,
\end{equation}
to remain hidden from detection.

\begin{figure}[th!b]
\vspace{-5pt}
\centering
\hspace{-7pt} \includegraphics[width=0.43\textwidth]{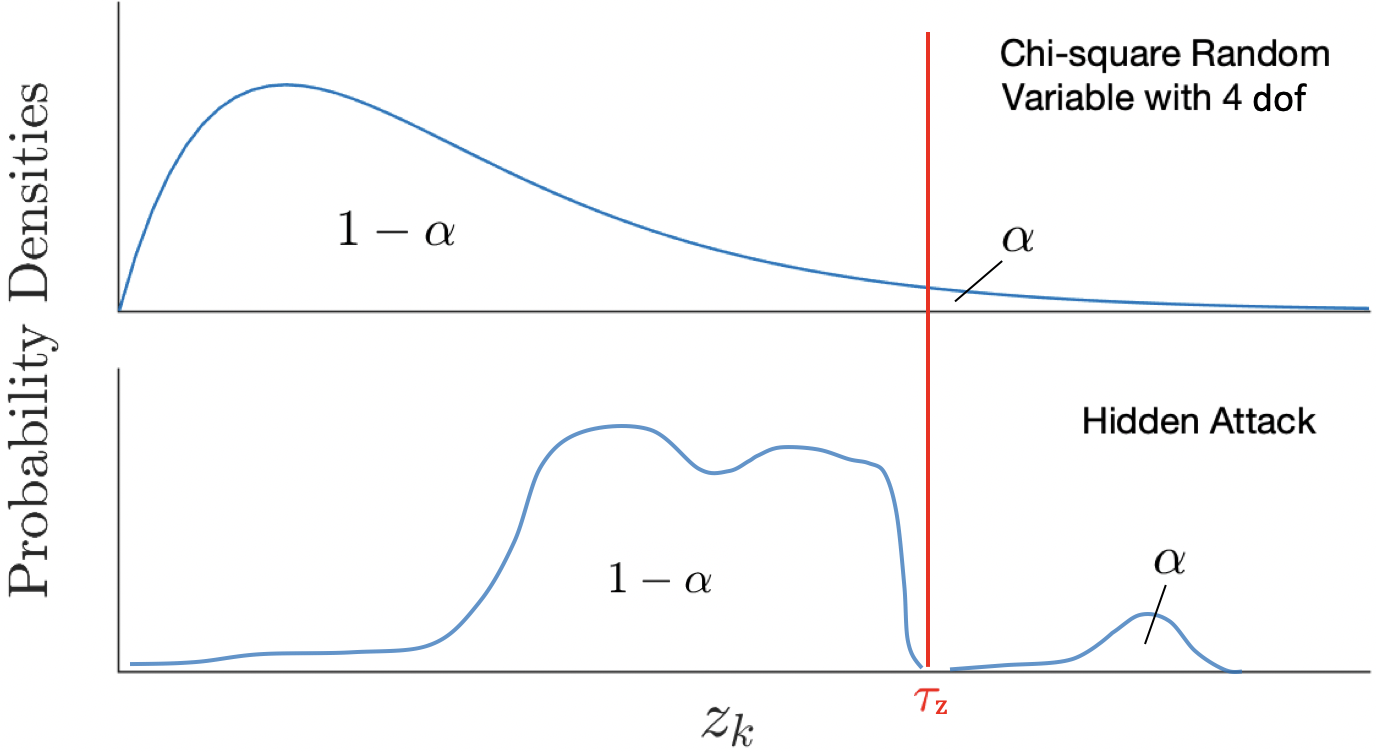}
\vspace{-8pt}
\caption{(top) The chi-square distribution of $z_k$ with $s = 4$ degrees of freedom (dof). A correctly chosen threshold value $\tau_z$ results in the test measure $z_k$ exceeding the threshold at a desired rate of $\alpha$ in the attack-free case. Similarly, during a hidden attack (bottom), an attacker can design an attack sequence such that the alarm rate matches the desired alarm rate $\alpha$, all while altering the distribution of $z_k$ and remaining hidden from detection.}
\label{fig:HiddenAttack}
\end{figure}
\vspace{-10pt}

\subsection{Problem Formulation}
\label{sec:Problem}

An attacker with the specific objective to hijack a system or to degrade system performance, will leave traces of inconsistent behavior. In this work, we focus on deceptive sensor attacks that purposely hide within the noise characteristics of the system model and evade detection of conventional fault detection procedures in order to remain undetected.

\begin{definition}
Sensor measurements are behaving consistently if:
\label{consist_definition}
\begin{itemize}
\item{The test measure follows a chi-square distribution $z_k \sim \chi^2 (s)$ that is determined by the $s$ number of sensors}.
\item{The signed test measure difference switches sign values at a proper (i.e., expected) rate}.
\end{itemize}
\end{definition}

Since we are considering sensor spoofing, an attack vector $\bm{\xi}_k$ containing malicious data can disrupt consistency, thereby causing the test measure to display non-random behavior. Formally, the problem that we are interested in solving is:
\begin{problem}
\label{problem1}
\textit{(Runtime Detection of Measurement Inconsistencies).} Given the quadratic test measure $z_k$, computed from the residual $\bm{r}_k$ as defined in \eqref{eq:Residual} and the inverse of the residual covariance matrix $\bm{\Sigma}$ in \eqref{eq:Residual_Covariance}, find a policy to determine at runtime whether sensor measurements are consistent, i.e., if any condition in Definition \ref{consist_definition} does not hold.
\end{problem}
\begin{section}{Serial Consistency of the Test Measure}
\label{sec:framework}

The overall cyber-physical control system architecture including our detection procedure for serial consistencies is summarized in Fig.~\ref{fig:AttackDiagram}. The monitor is placed in the system feedback to observe the relationship between the sensor measurements and state prediction while leveraging the chi-square test measure. We focus our attention on stealthy sensor attack sequences where a malicious attacker may inject an attack signal to measurements at any point between the sensors and state estimator.

\begin{figure}[th!b]
\vspace{-5pt}
\centering
\includegraphics[width=0.47\textwidth]{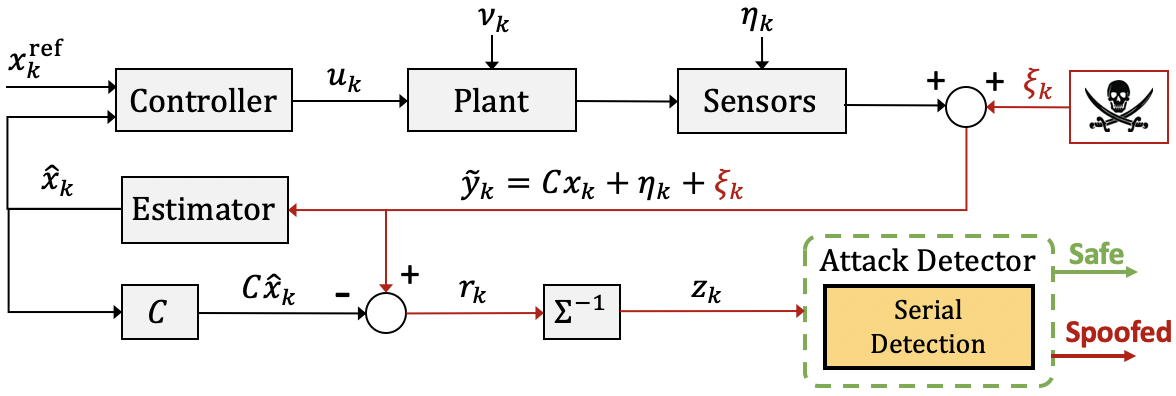}
\vspace{-9pt}
\caption{The architecture of a CPS while experiencing sensor attacks with the Serial Detector placed in the system feedback.}
\label{fig:AttackDiagram}
\end{figure}
\vspace{-15pt}

\subsection{Magnitude-based Detection}
\label{sec:Magnitude_randomness}

The design of the Serial Detector is to find inconsistent behavior of chi-square test measures within its expected distribution due to stealthy attacks to on-board sensor measurements. An attacker deliberately attempting to fool test measure-based detection algorithms may leave traces of inconsistencies within the serial sequence. We propose the Serial Detector that analyzes consecutive chi-square test measures at time instances $k$ and $k-1$, called the \textit{test measure difference}, that is described as:
\begin{equation}
\begin{split} \label{eq:test_measure_difference}
    d_{k} &= z_{k} - z_{k-1}, \\
	& = \bm{r}_k^{\mathsf{T}} \bm{\Sigma}^{-1} \bm{r}_k - \bm{r}_{k-1}^{\mathsf{T}} \bm{\Sigma}^{-1} \bm{r}_{k-1} \in \R.
\end{split}
\end{equation}

\begin{proposition} \label{pro:test_measure_expectation}
    A system that is free from sensor attacks, where we assume consecutive test measures are independent random variables that follow chi-square distributions $z_k, z_{k-1} \sim \chi^2(s)$ with $s$ degrees of freedom, has the following expectations of the test measure difference $d_k$:
    \begin{equation}  \label{eq:test_measure_expectation}
    \begin{split}
        \E[d_{k}] &= \E[z_{k}] - \E[z_{k-1}] = 0, \\
        \mathrm{Var}[d_{k}] &= \mathrm{Var}[z_{k}] + \mathrm{Var}[z_{k-1}] = 4s. 
    \end{split}
    \end{equation}
\end{proposition}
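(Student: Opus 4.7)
The plan is to exploit the elementary properties of expectation and variance for independent random variables, drawing directly on the distributional facts already stated in equation (8): for $z_k \sim \chi^2(s)$ we have $\E[z_k] = s$ and $\mathrm{Var}[z_k] = 2s$. Since both statements in the proposition concern linear functionals (expectation and, through an independence-based expansion, variance) of the difference $d_k = z_k - z_{k-1}$, the argument reduces to routine moment manipulations. No properties of the Kalman filter, the residual structure, or the covariance $\bm{\Sigma}$ need to be invoked beyond what is encoded in the chi-square distributional hypothesis.

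First I would handle the mean. Applying linearity of expectation to the definition of $d_k$ in \eqref{eq:test_measure_difference} gives $\E[d_k] = \E[z_k] - \E[z_{k-1}]$. Plugging in $\E[z_k] = \E[z_{k-1}] = s$ from \eqref{eq:Expected_z} immediately yields $\E[d_k] = 0$, which is the first assertion.

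Next I would address the variance. Because $z_k$ and $z_{k-1}$ are assumed independent in the hypothesis, the covariance term in the variance expansion vanishes, so
\begin{equation*}
\mathrm{Var}[d_k] = \mathrm{Var}[z_k - z_{k-1}] = \mathrm{Var}[z_k] + \mathrm{Var}[z_{k-1}] - 2\,\mathrm{Cov}[z_k, z_{k-1}] = \mathrm{Var}[z_k] + \mathrm{Var}[z_{k-1}].
\end{equation*}
Substituting $\mathrm{Var}[z_k] = \mathrm{Var}[z_{k-1}] = 2s$ from \eqref{eq:Expected_z} produces $\mathrm{Var}[d_k] = 4s$, completing the proposition.

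The only substantive issue is the independence assumption itself, which is built into the hypothesis rather than derived; in practice the test measure $z_k$ depends on the residual $\bm{r}_k$, and for a steady-state Kalman filter the innovation sequence is white, which justifies treating consecutive $z_k$'s as independent. I would note this briefly for context but not elaborate, since the proposition explicitly assumes the independence and the arithmetic above is therefore the entire proof.
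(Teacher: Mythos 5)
Your argument is correct and matches the paper's (implicit) reasoning exactly: the paper offers no separate proof for this proposition, treating it as an immediate consequence of linearity of expectation, the independence assumption, and the chi-square moments $\E[z_k]=s$, $\mathrm{Var}[z_k]=2s$ from \eqref{eq:Expected_z} --- precisely the steps you carry out. Your closing remark on the whiteness of the steady-state innovation sequence is a reasonable aside, but as you note the independence is assumed in the hypothesis, so nothing further is required.
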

\vspace{5pt}

Given an attack-free system that follows the expectation \eqref{eq:test_measure_expectation} in Proposition \ref{pro:test_measure_expectation}, the test measure difference $d_k$ follows
\begin{equation} \label{eq:variancegamma_dist}
\vspace{-1pt}
    d_k \sim \mathcal{VG} \Big( \E[d_{k}], \sqrt{\mathrm{Var}[d_{k}]}, 0, \frac{2}{s} \Big) \in \R,
    \vspace{-1pt}
\end{equation}
where $\mathcal{VG}( \cdot, \cdot, \cdot, \cdot)$ denotes the \textit{variance-gamma distribution} \cite{seneta_2004}, which is a mixed distribution of the normal distribution and gamma distribution. As the chi-square distribution is a special case of the gamma distribution, the difference of two gamma random variables (i.e. chi-square random variables) results in the variance-gamma distribution \cite{gamma_diff_Klar}. The parameters within the variance-gamma distribution that describe the difference of two chi-square random variables, generalized in \cite{ferrari2019note}, are the location $c = \E[d_{k}]$, spread $\bar{\sigma} = \sqrt{\mathrm{Var}[d_k] }$, asymmetry $\vartheta = 0$, and shape $\lambda = \frac{2}{s}$. The probability density function (PDF) of the variance-gamma distribution follows 
\begin{align} \label{eq:PDF_VG}
\small
    \hspace{-4pt} f(x; c, \bar{\sigma}, \vartheta, \lambda) \hspace{-1pt} =& \frac{ 2e^{\hspace{-.5pt} (\vartheta(x-c)/\bar{\sigma}^2 )} \hspace{-.5pt} |x \hspace{-1pt} - \hspace{-1pt} c|^{ \frac{1}{\lambda} \hspace{-1pt} - \hspace{-1pt} \frac{1}{2}} }{ \bar{\sigma}\sqrt{2 \pi} \lambda^{ \frac{1}{\lambda}} \Gamma \big( \frac{1}{\lambda} \big) } \hspace{-2pt} \bigg( \hspace{-3pt} \frac{1}{\sqrt{2 \bar{\sigma}^2 \hspace{-1pt} / \hspace{-1pt} \lambda \hspace{-1pt} + \hspace{-1pt} \vartheta^2 } } \hspace{-2pt} \bigg)^{ \hspace{-3pt} \frac{1}{\lambda} \hspace{-1pt} - \hspace{-1pt} \frac{1}{2}} \nonumber \\
    & \times K_{ \frac{1}{\lambda} - \frac{1}{2}} \bigg( \frac{ |x-c| \sqrt{2 \bar{\sigma}^2 / \lambda + \vartheta^2 } }{\bar{\sigma}^2} \bigg),
\end{align}
\normalsize
where $K_{ \lambda}$ is the \textit{modified Bessel function of the third kind} of order $\lambda$ and $\Gamma(\cdot)$ is the \textit{gamma function} \cite{statsbook}. During nominal conditions, the test measure difference $d_k$ is a symmetric zero-mean distribution (i.e., parameters $c = \vartheta = 0$). 

\textbf{\textit{False Alarms:}} Similar to other detection algorithms in literature \cite{BadData,CUSUM_Journal}, we leverage an alarm rate to diagnose the health of the system from sensor attacks. The magnitude-based detection scheme compares the test measure difference $d_{k}$ to a threshold $\tau$ by:
\begin{equation}
\begin{split}
    \label{eq:det_thresh}
    \bigg\{ \begin{array}{llrl}
	|d_{k}| > \tau_d & \longrightarrow & \textit{alarm:} & \hspace{-5pt} \zeta_k^M = 1, \\[2pt]
    |d_{k}| \leq \tau_d & \longrightarrow & \textit{no alarm:} & \hspace{-5pt} \zeta_k^M = 0,
    \end{array}
\end{split}
\end{equation} 
where the chosen threshold $\tau_d$ is dependent on the expected test measure difference distribution described in \eqref{eq:variancegamma_dist}. In Fig.~\ref{fig:Var_Gamma_distribution} we show how the distribution of the test measure difference $d_k = z_k - z_{k-1}$ (the difference of two chi-square random variables) is affected by the number of sensors $s$.

\begin{figure}[th!b]
\vspace{-3pt}
\centering
\includegraphics[width=0.45\textwidth]{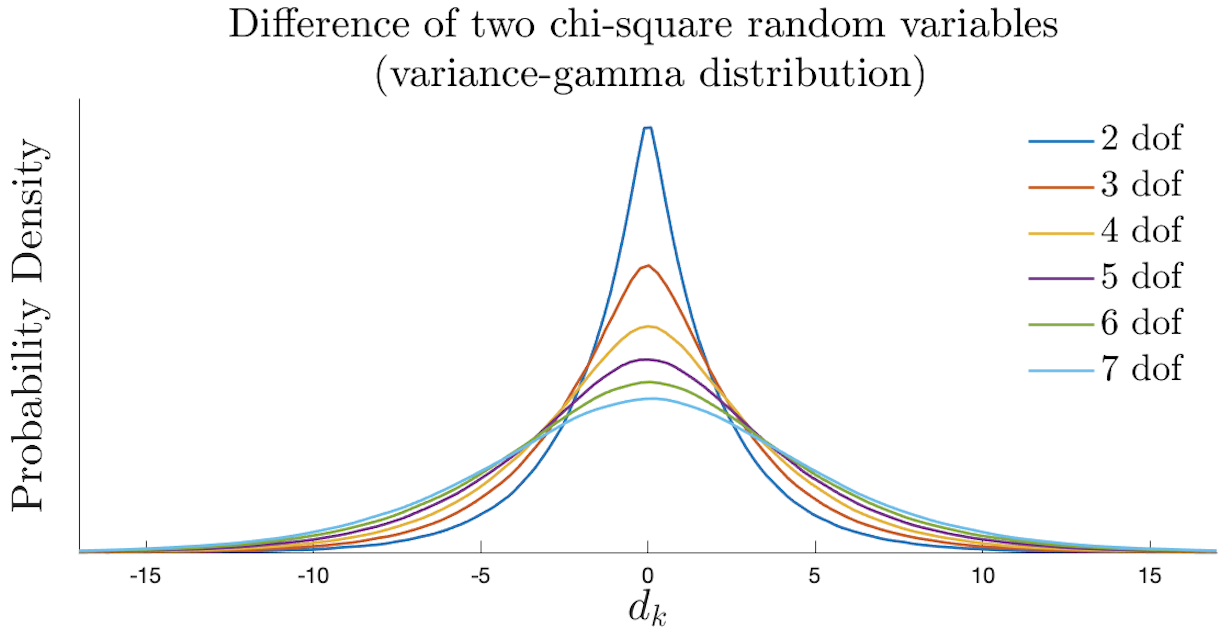}
\vspace{-7pt}
\caption{The resulting distribution of the test measure difference $d_k = z_k - z_{k-1}$ follows a variance-gamma distribution in the attack-free scenario. Shown are the effects on the distribution of $d_k$ for dof $= \{2,3,4,5,6,7\}$.}
\label{fig:Var_Gamma_distribution}
\end{figure}

Under nominal circumstances, i.e. in the absence of attacks, an alarm is triggered at a desired rate $\psi^M_{des} \in (0,1)$ given the chosen threshold value. The following lemma provides a method to choose a threshold to satisfy a user-defined desired alarm rate.
\begin{lemma} \label{lem:VG_choose_threshold}
    Assuming that the system is attack-free (i.e. $\bm{\xi}_k = \bm{0}$) and considering the procedure in \eqref{eq:det_thresh} to trigger an alarm, a specific threshold value $\tau_d$ is chosen by
    \begin{equation} \label{eq:thresh_d}
        \tau_d = \{ \tau_d \in \R_{>0} : \PP( \zeta^M_k = 1 ) = \psi^M_{des} \},
    \end{equation}
    such that the result is a desired alarm rate $\psi^M_{des}$.
\end{lemma}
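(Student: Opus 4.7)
The plan is to translate the event $\{\zeta_k^M = 1\}$ into a probability statement about the random variable $d_k$, then invert that statement in $\tau_d$. Under the attack-free assumption, Proposition~\ref{pro:test_measure_expectation} together with \eqref{eq:variancegamma_dist} tells us that $d_k \sim \mathcal{VG}(0, \sqrt{4s}, 0, 2/s)$, with density $f_{d_k}(x) = f(x; 0, \sqrt{4s}, 0, 2/s)$ obtained by substituting $c=0$, $\bar\sigma = \sqrt{4s}$, $\vartheta = 0$, $\lambda = 2/s$ into \eqref{eq:PDF_VG}. Since $\vartheta = 0$ and $c = 0$, the exponential prefactor in \eqref{eq:PDF_VG} reduces to one and the density depends on $x$ only through $|x|$, so $f_{d_k}$ is symmetric about the origin.

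Next, I would use the decision rule \eqref{eq:det_thresh} to write the alarm probability explicitly as
\begin{equation*}
\PP(\zeta_k^M = 1) = \PP(|d_k| > \tau_d) = 2\int_{\tau_d}^{\infty} f_{d_k}(x)\,dx = 2\bigl(1 - F_{d_k}(\tau_d)\bigr),
\end{equation*}
where the factor of two comes from the symmetry established above and $F_{d_k}$ is the CDF of the attack-free variance-gamma distribution. Setting the right-hand side equal to the user-specified $\psi^M_{des}$ and solving yields the implicit characterization
\begin{equation*}
\tau_d = F_{d_k}^{-1}\!\bigl(1 - \tfrac{1}{2}\psi^M_{des}\bigr),
\end{equation*}
which is exactly the set described in \eqref{eq:thresh_d}.

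To finish, I would argue well-posedness: because $f_{d_k}(x) > 0$ for all $x \in \R$ (the modified Bessel function $K_{1/\lambda - 1/2}$ is strictly positive on $(0,\infty)$ and the algebraic prefactor is positive), the CDF $F_{d_k}$ is strictly increasing, so the inverse image in \eqref{eq:thresh_d} is a singleton in $\R_{>0}$ whenever $\psi^M_{des} \in (0,1)$. This guarantees that a unique positive threshold exists and achieves the desired alarm rate.

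The main obstacle is computational rather than conceptual: unlike the chi-square case used in \eqref{eq:BD_threshold}, the CDF $F_{d_k}$ has no closed-form inverse because of the Bessel function in \eqref{eq:PDF_VG}. I would therefore emphasize that the inversion is carried out numerically (e.g., by bisection or Newton iteration on a tabulated CDF) as a one-time offline computation during detector tuning, and note that the monotonicity argument above is precisely what makes such a numerical root-finding routine converge to the unique $\tau_d$.
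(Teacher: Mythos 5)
Your proof is correct and follows essentially the same route as the paper: both reduce the alarm condition to $\PP(|d_k|>\tau_d)=\psi^M_{des}$ and invert the variance-gamma CDF at $1-\tfrac{1}{2}\psi^M_{des}$. You additionally make explicit the symmetry argument behind the factor of two and the strict monotonicity of $F_{d_k}$ that guarantees a unique positive $\tau_d$, details the paper leaves implicit.
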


\begin{proof}
    Let $F_{d_k}(x; c, \bar{\sigma}, \vartheta, \lambda) $ denote the cumulative distribution function (CDF) of the random variable $d_k$ from the PDF in \eqref{eq:PDF_VG}. We compute the inverse CDF for a given desired false alarm rate $\psi^M_{des}$ to find the threshold value
    \begin{equation} \label{eq:magnitude_thresh}
        \tau_d = F_{d_k}^{-1} \Big(1- \frac{\psi^M_{des}}{2}; c, \bar{\sigma}, \vartheta, \lambda \Big) \in \R_{>0},
    \end{equation}
    such that $\PP( \zeta^M_k = 1 ) = \PP(|d_k| > \tau_d) = \psi^M_{des}$ to achieve a desired false alarm rate, thus concluding the proof.
\end{proof}

\textbf{\textit{Alarm Rate Estimation:}} We employ a runtime method of estimating the alarm rate such that we are able to eliminate the need to store a sequence of values. In this work, a Memoryless Runtime Estimator (MRE) \cite{Paul_IFAC} is leveraged to eliminate the need to use a windowed method to compute an alarm rate estimation $\hat{\psi}_{k}^M \in [0,1]$. The MRE algorithm is updated by following
\begin{equation} \label{eq:MRE_algorithm}
    \hat{\psi}_{k}^M = \hat{\psi}_{k-1}^M + \frac{\zeta_{k}^M - \hat{\psi}_{k-1}^M }{\ell},
\end{equation}
where $\ell$ is a user-defined ``pseudo-window" length. The resulting distribution while leveraging MRE can be approximated to a normal distribution for pseudo-window lengths $\ell \geq 10$ \cite{Paul_IFAC} consisting of a variance that follows that of a exponential moving average (EMA) \cite{moving_average}.

\begin{lemma} \label{lem:Magnitude_bounds}
Given the test measure difference $d_k$ defined in \eqref{eq:test_measure_difference} for a system that is assumed to be attack-free and tuned for a desired false alarm rate $\psi^M_{des}$, the estimate alarm rate follows a Normal distribution described by
\begin{equation} \label{eq:Lemma_AR_dist}
    \hat{\psi}_{k}^M \sim \mathcal{N} \bigg( \psi^M_{des}, \frac{\psi^M_{des}(1-\psi^M_{des})}{2\ell-1} \bigg).
\end{equation}
\end{lemma}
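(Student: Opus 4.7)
The plan is to recognize that the memoryless runtime estimator recursion \eqref{eq:MRE_algorithm} is algebraically equivalent to an exponential moving average
\[
    \hat{\psi}_k^M = \Big(1-\tfrac{1}{\ell}\Big)\hat{\psi}_{k-1}^M + \tfrac{1}{\ell}\zeta_k^M,
\]
driven by the alarm indicator sequence $\zeta_k^M$. Under the attack-free assumption, each $\zeta_k^M$ is a Bernoulli random variable with success probability $\psi^M_{des}$ by construction of the threshold $\tau_d$ in Lemma~\ref{lem:VG_choose_threshold}, so $\mathbb{E}[\zeta_k^M]=\psi^M_{des}$ and $\mathrm{Var}[\zeta_k^M]=\psi^M_{des}(1-\psi^M_{des})$. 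The goal is then to propagate these moments through the linear recursion and invoke a central limit argument to obtain the Gaussian shape.

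First I would take expectations of both sides: in steady state $\mathbb{E}[\hat{\psi}_k^M]=\mathbb{E}[\hat{\psi}_{k-1}^M]$, which immediately forces $\mathbb{E}[\hat{\psi}_k^M]=\psi^M_{des}$, matching the claimed mean. Next, because $\zeta_k^M$ only depends on the current residual pair and is (approximately) independent of the accumulated past estimate $\hat{\psi}_{k-1}^M$, I would take variances of both sides and obtain the fixed-point equation
\[
    V = \Big(1-\tfrac{1}{\ell}\Big)^{\!2} V + \tfrac{1}{\ell^{2}}\,\psi^M_{des}(1-\psi^M_{des}).
\]
Expanding $1-(1-1/\ell)^2 = (2\ell-1)/\ell^2$ and solving for $V$ yields exactly $\psi^M_{des}(1-\psi^M_{des})/(2\ell-1)$, producing the variance asserted in \eqref{eq:Lemma_AR_dist}.

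For the Gaussian shape, I would unroll the recursion as a geometrically-weighted sum $\hat{\psi}_k^M = \sum_{i=0}^{k-1} \frac{1}{\ell}(1-\frac{1}{\ell})^{i}\zeta_{k-i}^M$ of (approximately) independent Bernoulli terms and invoke a CLT-type argument for weighted sums of bounded independent variables; as noted in the text right before the lemma statement, for pseudo-window lengths $\ell \geq 10$ this approximation is already tight, and the same EMA variance bookkeeping is used in \cite{Paul_IFAC}, which I would cite to justify both the normality approximation and the EMA variance formula.

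The main obstacle is the independence assumption needed for the variance step and the CLT: consecutive test measure differences $d_k$ and $d_{k+1}$ share the residual term $\bm{r}_k$, so the indicator sequence $\{\zeta_k^M\}$ is not strictly i.i.d. I would address this by observing that the event $\{|d_k|>\tau_d\}$ depends on the pair $(\bm{r}_{k-1},\bm{r}_k)$ while $\{|d_{k+1}|>\tau_d\}$ depends on $(\bm{r}_k,\bm{r}_{k+1})$, so the indicators form a $1$-dependent stationary sequence whose steady-state first and second moments coincide with the i.i.d. Bernoulli case, and for which a CLT still holds; any residual error is absorbed into the ``$\approx$'' approximation already acknowledged for EMA in \cite{moving_average,Paul_IFAC}.
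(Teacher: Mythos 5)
Your proposal is correct and follows essentially the same route as the paper's proof: the paper likewise models the alarm indicators as Bernoulli($\psi^M_{des}$) trials, invokes the binomial-to-normal approximation for pseudo-window lengths $\ell \geq 10$, and cites the exponential-moving-average variance to obtain the $2\ell-1$ denominator, whereas you additionally derive that variance explicitly from the steady-state fixed-point equation $V = (1-\tfrac{1}{\ell})^2 V + \tfrac{1}{\ell^2}\psi^M_{des}(1-\psi^M_{des})$. Your discussion of the $1$-dependence of $\{\zeta_k^M\}$ goes beyond the paper, which silently assumes independence; note only that $1$-dependence does not in general leave the variance of the weighted sum unchanged (the lag-one autocovariance of the indicators enters), so that step remains an approximation rather than an identity, consistent with the approximate nature of the paper's own argument.
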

\vspace{5pt}

\begin{proof}
We first characterize the magnitude-based detector tuned for a desired false alarm rate $\psi^M_{des}$ as a Binomial distribution $\mathcal{B} \big( \cdot, \cdot \big) $ where $\psi^M_{des}$ is a probability for a ``success" during a specified number of ``trials" (Refer to \cite{statsbook} for further explanations). By way of the binomial approximation for larger pseudo-window size $\ell \geq 10$, a normal distribution can be used to approximate the alarm rate while leveraging MRE \eqref{eq:MRE_algorithm} for estimation that results in
\begin{equation}
    \label{eq:proof_magnitude_bounds}
    \E[\psi^M] = \psi^M_{des} \hspace{-1pt}, \hspace{6pt} \mathrm{Var}[\psi^M] = \frac{ \psi^M_{des}(1 \hspace{-1pt} - \hspace{-1pt} \psi^M_{des})}{2\ell -1}.
\end{equation}
From \eqref{eq:proof_magnitude_bounds} we obtain the distribution in \eqref{eq:Lemma_AR_dist} to characterize the estimated alarm rate for magnitude-based detection.
\end{proof}

With the known expected false alarm rate distribution described in \eqref{eq:Lemma_AR_dist}, we want to find bounds on the estimated alarm rate $\hat{\psi}_k^M, \; \forall k$ to determine if an attack has occurred. The following corollary provides detection bounds with a specific level of confidence $1-\beta$, where $\beta \in [0,1]$ is a user defined level of significance\footnote{Reducing the value of $\beta$ causes the detection bounds to move farther from the expected alarm rate, thus reducing the frequency of falsely ``detecting" under nominal (i.e., no attack) conditions while consequently giving an attacker more freedom to design an attack without being detected, while the opposite is true when increasing $\beta$.}.

\begin{corollary}
    Assuming a system with $s$ sensors that employs the chi-square detection scheme that is monitoring the test measure difference \eqref{eq:test_measure_difference} with a level of significance $\beta$ while leveraging MRE \eqref{eq:MRE_algorithm}, detection of sensor attacks occurs when $\Omega_{-} \leq \hat{\psi}_{k}^M \leq \Omega_{+}$ is not satisfied where
    \begin{equation} \label{eq:lemma_magnitude_bounds}
    \Omega_{\pm} = \E[\psi^M] \pm Z \sqrt{\frac{ \E[\psi^M] (1-\E[\psi^M])}{2\ell -1}}.
    \end{equation}
\end{corollary}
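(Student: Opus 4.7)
The plan is to build the two-sided detection interval directly from the Normal approximation established in Lemma~\ref{lem:Magnitude_bounds} and then invert the standard Gaussian tail probability to isolate the bounds $\Omega_{\pm}$. First I would identify $Z$ as the critical value $Z = \Phi^{-1}(1-\beta/2)$ of the standard normal distribution, which is the natural choice given that the user-defined significance level $\beta$ is to be split symmetrically between the two tails (consistent with the absolute-value alarm rule in \eqref{eq:det_thresh} and the symmetry of the variance-gamma density used to design the threshold).

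Next, starting from $\hat{\psi}_{k}^M \sim \mathcal{N}\!\left(\E[\psi^M],\, \tfrac{\E[\psi^M](1-\E[\psi^M])}{2\ell-1}\right)$ as given by Lemma~\ref{lem:Magnitude_bounds} with $\E[\psi^M]=\psi^M_{des}$, I would standardize by forming
\begin{equation*}
    W_k \;=\; \frac{\hat{\psi}_{k}^M - \E[\psi^M]}{\sqrt{\E[\psi^M](1-\E[\psi^M])/(2\ell-1)}} \;\sim\; \mathcal{N}(0,1).
\end{equation*}
Then I would impose the desired confidence requirement $\PP(|W_k| \leq Z) = 1-\beta$, which by definition of the critical value $Z$ is equivalent to $\PP(-Z \leq W_k \leq Z) = 1-\beta$. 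Substituting the expression for $W_k$ and rearranging algebraically to isolate $\hat{\psi}_k^M$ yields an interval of the form
\begin{equation*}
    \E[\psi^M] - Z\sqrt{\tfrac{\E[\psi^M](1-\E[\psi^M])}{2\ell-1}} \;\leq\; \hat{\psi}_k^M \;\leq\; \E[\psi^M] + Z\sqrt{\tfrac{\E[\psi^M](1-\E[\psi^M])}{2\ell-1}},
\end{equation*}
which matches the claimed $\Omega_\pm$ exactly. The contrapositive interpretation is that $\hat{\psi}_k^M$ leaving $[\Omega_-,\Omega_+]$ is, under nominal operation, an event of probability at most $\beta$, so it is flagged as an attack detection, concluding the proof.

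The only subtle point, and the one I expect to be the main obstacle to state cleanly rather than mathematically hard, is justifying that the two-sided Gaussian critical value is the appropriate choice here: since both an artificially lowered alarm rate (consistent with hidden/zero-alarm attacks) and an artificially elevated alarm rate (consistent with overtly disruptive attacks) should each be treated as evidence of compromise, a symmetric two-tailed interval is required, and hence $Z = \Phi^{-1}(1-\beta/2)$ rather than $\Phi^{-1}(1-\beta)$. I would include a brief sentence to this effect so the choice of $Z$ inside $\Omega_{\pm}$ is unambiguous, and the remainder of the argument reduces to a routine standardization step.
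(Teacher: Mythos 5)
Your proposal is correct and follows essentially the same route as the paper: both invoke the Normal approximation from Lemma~\ref{lem:Magnitude_bounds} and construct a symmetric two-sided confidence interval with critical value $Z$, your $\Phi^{-1}(1-\beta/2)$ being identical to the paper's $\bigl|\Phi^{-1}(\beta/2)\bigr|$ by symmetry of the standard normal. Your added standardization step and the remark on why a two-tailed interval is appropriate are just more explicit versions of what the paper leaves implicit.
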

\vspace{2pt}

\begin{proof}
    We construct confidence intervals for a normally distributed variable of a specific confidence level, determined by z-score $Z = \big| \Phi^{-1} \big(\frac{\beta}{2} \big) \big|$, that provide detection bounds by
    \begin{equation} \label{eq:lemma_magnitude_bounds2}
    \begin{split}
    \small
    & \E[\psi^M] - \Big| \Phi^{-1} \Big(\frac{\beta}{2} \Big) \Big| \sqrt{\frac{ \E[\psi^M] (1-\E[\psi^M])}{2\ell -1}} \leq \hat{\psi}_{k}^M \\
    & \leq \E[\psi^M] + \Big| \Phi^{-1} \Big(\frac{\beta}{2} \Big) \Big| \sqrt{\frac{ \E[\psi^M] (1-\E[\psi^M])}{2\ell -1}}
    \end{split}
    \end{equation}
    which satisfy \eqref{eq:lemma_magnitude_bounds}, concluding the proof.
\end{proof}

Detection of sensor attacks occur when an estimated alarm rate $\hat{\psi}_{k}^M$ travels beyond the thresholds from $\Omega_{\pm} = [ \Omega_-,\Omega_+ ]$.

\subsection{Signed Randomness}
\label{sec:Sign_randomness}

To further strengthen detection of inconsistencies within the test measure, we monitor the ``runs" behavior of the test measure difference sequence. While a smart attacker may be able to fool the magnitude-based monitor as discussed in Section \ref{sec:Magnitude_randomness}, an attacker may leave traces of non-random behavior on the signed test measure difference. The test we use to monitor for signed randomness is influenced by the Serial Independence Runs (SIR) Test \cite{serial_test}. An example of the SIR test is shown in Fig. \ref{fig:Serial_Randomness}, where it monitors a sequence of data by first computing the difference between the current and previous data values and taking the sign of the difference to create a two-valued data sequence (i.e., positive and negative values). Then, the number of observed runs $N_r$, defined as consecutive values of the same sign, are counted over the sequence length. In Fig. \ref{fig:Serial_Randomness} we see that over the sequence length $W=14$ there are $N_r = 12$ runs, which are highlighted by the red and blue lines.

\begin{figure}[th!b]
\vspace{-4pt}
\centering
\includegraphics[width=0.45\textwidth]{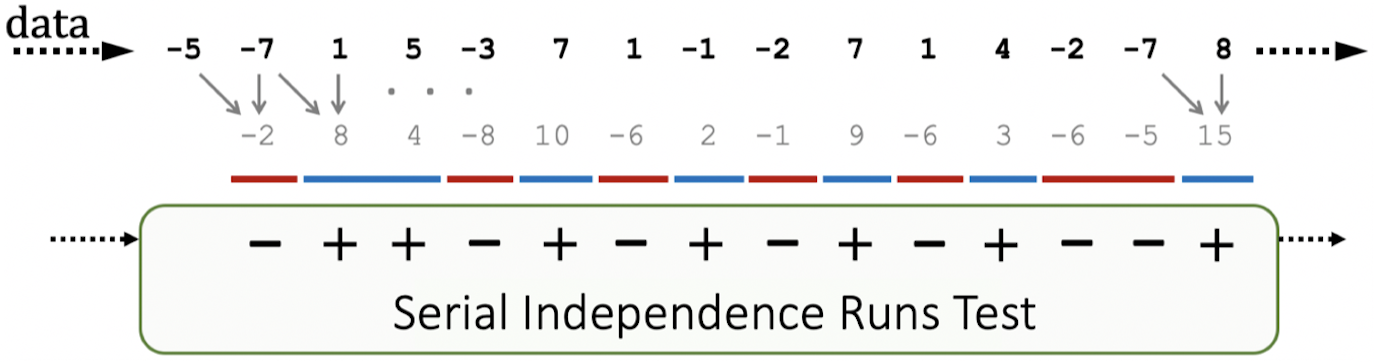}
\vspace{-7pt}
\caption{A sequence of data, from left to right, converted to sequence of signed values while leveraging the Serial Independence Runs Test.}
\label{fig:Serial_Randomness}
\vspace{-6pt}
\end{figure}

A drawback of the SIR test is the requirement to store the $W$ length sequence of test measure differences $d_k$ and then count the number of observed runs $N_r$ over this sequence. Alternatively, we would like to use a window-less method to eliminate the need for storing an entire sequence to determine whether the signed test measure difference is behaving randomly. To this end, we propose to observe sign switches at runtime by triggering an alarm at a time $k$ when the the present test measure difference is of the opposite sign from the previous test measure difference at time $k-1$.

We first compute the sign of the test measure difference by the following:
\begin{equation}
\vspace{-1.5pt}
\begin{split}
    \label{eq:Sign_function2}
    \mathrm{sgn}(d_{k}) := \left\{ \begin{array}{rl}
	-1, & \hspace*{2pt} \text{if } d_{k} < 0, \\
    0, & \hspace*{2pt} \text{if } d_{k} = 0, \\
    1, & \hspace*{2pt} \text{if } d_{k} > 0, \\
    \end{array} \right.
\end{split}
\end{equation} 
and given that the distribution of the test measure difference $d_k$ is symmetric (assuming nominal conditions) over the expected value $\E[d_k] = 0$, the probability of observing the signed values of the test measure difference are
\begin{equation}
\label{eq:binomial_probs}
    \begin{split}
    \PP\big(\text{sgn}(d_k) = -1 \big) &= 0.5, \\
    \PP\big(\text{sgn}(d_k) = 0 \big) &= 0, \\
    \PP\big(\text{sgn}(d_k) = 1 \big) &= 0.5.
    \end{split}
\end{equation}

As sensor measurements are received at every $k$th time instance, the next test measure difference $d_k$ is computed from \eqref{eq:Residual}, \eqref{eq:test_measure}, and \eqref{eq:test_measure_difference}. A switch of the test measure difference sign signifies the end of a run and an alarm $\zeta_{k}^S \in \{0,1\}$ is triggered such that $\zeta_{k}^S = 1$ at a time instance $k$, otherwise $\zeta_{k}^S = 0$. The procedure to trigger a test measure difference alarm follows: 
\begin{equation} \label{pro:TestMeasure_Diff_Sign_Switch}
\begin{split}
    \zeta_{k}^S := \left\{ \begin{array}{rl}
	1, & \hspace*{2pt} \text{if } \mathrm{sgn}(d_{k}) = -\mathrm{sgn}(d_{k-1}), \\
    0, & \hspace*{2pt} \text{otherwise}. \\
    \end{array} \right.
\end{split}
\end{equation} 

The alarm $\zeta_{k}^{S} \in \{0,1\}$ in \eqref{pro:TestMeasure_Diff_Sign_Switch} is then sent into the MRE to provide an updated runtime estimate of the test measure difference alarm rate $\hat{\psi}_{k}^{S}$ at time instance $k$.

\begin{lemma} \label{lem:SI}
Given a system that is not experiencing attacks, the test measure difference alarm rate while leveraging MRE \eqref{eq:MRE_algorithm} for estimation is described as a Normally distributed random variable by the following:
\vspace{-4pt}
\begin{equation}
\label{eq:Serial_SignChange}
    \hat{\psi}_{k}^S \sim \mathcal{N} \big( \E[\psi^S] , \mathrm{ Var}[\psi^S] \big).
\end{equation}
\end{lemma}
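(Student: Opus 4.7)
The plan is to mirror the proof of Lemma \ref{lem:Magnitude_bounds}, because the sign-switch rule in \eqref{pro:TestMeasure_Diff_Sign_Switch} produces a $\{0,1\}$-valued indicator $\zeta_k^S$ that plays exactly the same role as the magnitude alarm $\zeta_k^M$. First, I would use the symmetry of the variance-gamma law \eqref{eq:variancegamma_dist} about zero (recall $c=\vartheta=0$ under $\mathcal{H}_0$) to conclude that $\PP(\mathrm{sgn}(d_k)=\pm 1)=1/2$, with the $d_k=0$ event having measure zero. Consequently each $\zeta_k^S$ is a Bernoulli random variable whose success parameter is exactly $\E[\psi^S] := \PP(\mathrm{sgn}(d_k)\neq\mathrm{sgn}(d_{k-1}))$, the attack-free probability of a sign flip between consecutive samples.

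Second, I would invoke the same binomial-plus-MRE reasoning that produced \eqref{eq:proof_magnitude_bounds}. The running count of sign-switch alarms over a window is binomially distributed, and feeding $\{\zeta_k^S\}$ into the recursion \eqref{eq:MRE_algorithm} is equivalent in steady state to an exponentially weighted moving average, whose variance contracts by the factor $1/(2\ell-1)$. Applying the normal approximation to the binomial, valid for pseudo-window length $\ell\ge 10$ as cited from \cite{Paul_IFAC}, yields
\begin{equation*}
    \hat{\psi}_k^S \;\sim\; \mathcal{N}\!\left(\E[\psi^S],\; \frac{\E[\psi^S]\,(1-\E[\psi^S])}{2\ell-1}\right),
\end{equation*}
which matches \eqref{eq:Serial_SignChange} once $\mathrm{Var}[\psi^S]$ is identified with the right-hand variance. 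The argument is structurally identical to the proof of Lemma \ref{lem:Magnitude_bounds}; only the Bernoulli parameter changes.

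The main obstacle is that consecutive differences $d_k = z_k - z_{k-1}$ and $d_{k-1} = z_{k-1} - z_{k-2}$ share the term $z_{k-1}$, so $\mathrm{sgn}(d_k)$ and $\mathrm{sgn}(d_{k-1})$ are \emph{not} independent even when the $z_k$'s are, and a naive $1/2$ argument for $\E[\psi^S]$ is incorrect — a quick symmetry count over the $3!$ orderings of three i.i.d.\ continuous samples gives $\E[\psi^S]=2/3$. I would address this in one of two ways: (a) leave $\E[\psi^S]$ unevaluated, exactly as the lemma statement already does, and invoke stationarity of the sign process so the MRE output concentrates around its mean with the EMA variance scaling; or (b) compute $\E[\psi^S]=2/3$ explicitly and appeal to a mild mixing/CLT argument for the 1-dependent Bernoulli stream, which preserves the normal limit with at most a constant correction to $\mathrm{Var}[\psi^S]$. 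Route (a) is the shorter one and matches the generality in which the lemma is phrased, so that is the writeup I would pursue.
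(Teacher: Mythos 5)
Your diagnosis of the key obstacle is exactly right and is in fact sharper than what the paper writes down: the indicators $\zeta_k^S$ are $1$-dependent because $d_k$ and $d_{k-1}$ share $z_{k-1}$, and the permutation count over the $3!$ orderings giving $\E[\psi^S]=2/3$ is the correct evaluation of the flip probability. Where you diverge from the paper is in how the normal law and, crucially, the variance constant are obtained. The paper does not argue from a Bernoulli/binomial model at all: it imports the known asymptotic distribution of the runs count from the Serial Independence Runs test, $N_R \sim \mathcal{N}\big(\tfrac{2W-1}{3}, \tfrac{16W-29}{90}\big)$, divides by $W$ to turn the count into a rate (giving $\E[\psi^S]=\tfrac{2}{3}$ and asymptotic variance $\tfrac{16}{90W}$), and then substitutes the MRE effective window $2\ell-1$ for $W$ to get $\mathrm{Var}[\psi^S]=\tfrac{16}{90(2\ell-1)}$. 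That citation is what supplies the exact constant; your route (b) gestures at it (``a constant correction'') but does not compute it, and your preferred route (a) leaves both moments unevaluated, which technically matches the lemma as phrased but cannot support Corollary \ref{cor:SI_Bounds}, whose bounds hard-code $\tfrac{2}{3}$ and $\tfrac{16}{90(2\ell-1)}$. Note also that if one were to fall back on the naive i.i.d.~Bernoulli variance $p(1-p)=\tfrac{2}{9}=\tfrac{20}{90}$ per effective sample, the result would overstate the variance relative to the correct $\tfrac{16}{90}$ — the $1$-dependence is negatively correlated in aggregate — so the correction is not optional. The fix is simply to close route (b) by citing the runs-test variance, which is precisely the paper's move; your Bernoulli framing then becomes a more self-contained explanation of where the $\tfrac{2}{3}$ comes from than the paper itself provides.
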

\vspace{3.5pt}

\begin{proof} \label{pr:SI_dist}
We want to convert the distribution of expected runs $\E[N_R] $ of test measure differences over a window-based sequence of length $W$ described in \cite{serial_test} by
\begin{equation} \label{eq:Serial_Independence_Runs}
    \vspace{-1pt}
    N_R \sim \mathcal{N} \bigg( \frac{2W-1}{3}, \frac{16W-29}{90} \bigg),
    \vspace{-1pt}
\end{equation}
to a runtime rate of expected test measure difference sign switching $\E[\psi^S]$. By first obtaining the asymptotic distribution and then transforming the expected observed runs to an expected rate of observed alarms $\E[\psi^S] = \frac{\E[N_R]}{W}$, we arrive to the expected sign switching alarm rate distribution
    \begin{equation} \label{eq:Serial_SignChange_MeanVar}
    \E[\psi^S] = \frac{2}{3}, \hspace*{5pt} \mathrm{ Var}[\psi^S] = \frac{16}{90(2\ell-1)},
    \end{equation}
while leveraging MRE for window-less estimation.
\end{proof}

The following corollary provides a proof for detection bounds of $\hat{\psi}_{k}^S$ to satisfy an expected alarm rate $\E[\psi^S]$.
\begin{corollary} \label{cor:SI_Bounds}
Given the test measure differences $d_{k} = z_{k} - z_{k-1} $, detection occurs by the test measure difference alarm rate when $ \Psi_- \leq \hat{\psi}_{k}^S \leq \Psi_+$ is not satisfied where
\begin{equation} 
\label{eq:SI_corollary}
    \Psi_{\pm} = \pm \Big| \Phi^{-1} \Big( \frac{\beta}{2} \Big) \Big| \sqrt{\frac{16}{90(2\ell-1)}} + \frac{2}{3}. 
\end{equation}

\vspace{1pt}
\end{corollary}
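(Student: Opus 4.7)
The plan is to mirror the structure of the previous corollary almost verbatim, since the corollary is essentially a restatement of a standard two-sided normal confidence interval, with the specific mean and variance supplied by Lemma \ref{lem:SI}. The work has already been done in Lemma \ref{lem:SI}: the estimated sign-switch alarm rate $\hat{\psi}_k^S$ is (approximately) normally distributed with mean $\E[\psi^S] = 2/3$ and variance $\mathrm{Var}[\psi^S] = 16/(90(2\ell-1))$. So the only remaining task is to translate a user-defined significance level $\beta$ into symmetric thresholds around the mean.

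First, I would invoke Lemma \ref{lem:SI} to justify treating $\hat{\psi}_k^S$ as $\mathcal{N}(2/3,\,16/(90(2\ell-1)))$ under the null (attack-free) hypothesis. Next, following exactly the argument used in the proof of the magnitude-based corollary, I would construct a two-sided confidence interval at confidence level $1-\beta$ by taking the critical z-score $Z = |\Phi^{-1}(\beta/2)|$, where $\Phi^{-1}$ denotes the inverse standard-normal CDF. Then the nominal operating region for $\hat{\psi}_k^S$ is
\begin{equation*}
\E[\psi^S] - Z\sqrt{\mathrm{Var}[\psi^S]} \;\leq\; \hat{\psi}_k^S \;\leq\; \E[\psi^S] + Z\sqrt{\mathrm{Var}[\psi^S]},
\end{equation*}
and any excursion of $\hat{\psi}_k^S$ outside this interval is declared a detection.

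Finally, I would substitute the explicit values from \eqref{eq:Serial_SignChange_MeanVar} into the interval endpoints to obtain
\begin{equation*}
\Psi_{\pm} \;=\; \frac{2}{3} \pm \Bigl|\Phi^{-1}\!\Bigl(\tfrac{\beta}{2}\Bigr)\Bigr|\sqrt{\frac{16}{90(2\ell-1)}},
\end{equation*}
which is exactly \eqref{eq:SI_corollary}, concluding the proof. There is no real obstacle here; the statement is a direct corollary of Lemma \ref{lem:SI} combined with the standard Gaussian quantile construction, and the only point to flag is that the normal approximation (hence the z-score argument) is valid only under the pseudo-window assumption $\ell \geq 10$ noted after \eqref{eq:MRE_algorithm}, so I would briefly remind the reader of this condition to make the corollary rigorous.
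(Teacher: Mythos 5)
Your proposal matches the paper's own proof: both invoke Lemma \ref{lem:SI} for the normal distribution of $\hat{\psi}_k^S$ with mean $2/3$ and variance $16/(90(2\ell-1))$, form the two-sided confidence interval with z-score $Z = |\Phi^{-1}(\beta/2)|$, and read off $\Psi_{\pm}$. Your added reminder that the normal approximation requires $\ell \geq 10$ is a small but reasonable strengthening; otherwise the argument is identical.
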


\begin{proof}
For a desired level of significance $\beta$ we find the bounds of $ \hat{\psi}_{k}^S$ for an expected alarm rate $\E[\psi^S]$ are
\begin{equation}\label{eq:SI_bound_proof}
- Z  \sqrt{ \frac{16}{90(2\ell-1)} } + \frac{2}{3} \leq \hat{\psi}_{k}^S \leq Z \sqrt{\frac{16}{90(2\ell-1)}} + \frac{2}{3},
\end{equation}
where z-score is $Z = \big| \Phi^{ -1} \big( \frac{\beta}{2} \big)  \big|$. From \eqref{eq:SI_bound_proof} we can finally obtain the detection bounds of $\Psi_{\pm}$ in \eqref{eq:SI_corollary} for alarm triggering at an expected alarm rate $\E[\psi^S] $.
\end{proof}

We should note that while we describe a runtime method for detecting anomalous signed behavior within the serial sequence of a chi-square random variable $z_k$, this technique may be used on any randomly distribution variable. As this method is non-parametric, the signed behavior is independent from its underlying distribution \cite{serial_test,Seigel}.
\vspace{-4pt}

\end{section}
\begin{section}{Undetectable Attacks}
\label{sec:Undetected_Attacks}

This section analyzes the attack sequence that an attacker must make in order to remain undetected from our serial randomness-based detector. Continuing with assumptions previously made in Section \ref{sec:HiddenAttacks}, we assume a worst-case scenario where a smart attacker has access to the system model, noise characteristics, control inputs, and state estimator to fool our detection technique. In particular, we focus on the attack sequences of $\bm{\xi}_k$ that can disrupt nominal closed-loop system behavior while remaining hidden. 

\subsection{Magnitude-based Detection}
\label{sec:mag_based_detection}

We begin by considering an attack sequence that does not allow the magnitude-based alarm rate $\hat{\psi}_k^M$ to travel beyond detection bounds described in \eqref{eq:lemma_magnitude_bounds}. If we recall the test measure difference $d_k$ in \eqref{eq:test_measure_difference}, but written in terms of the sensor attack vector $\bm{\xi}_k$, we have
\small
\begin{align} \label{eq:test_measure_diff_AV}
    d_{k} =& \hspace{2pt} \bm{r}_k^{\mathsf{T}} \bm{\Sigma}^{-1} \bm{r}_k - \bm{r}_{k-1}^{\mathsf{T}} \bm{\Sigma}^{-1} \bm{r}_{k-1} \nonumber \\
    =& \hspace{2pt} (\bm{Ce}_k + \bm{\eta}_k + \bm{\xi}_k)^{\mathsf{T}} \bm{\Sigma}^{-1} (\bm{Ce}_k + \bm{\eta}_k + \bm{\xi}_k) \\
    &- \hspace{-1pt} (\bm{Ce}_{k-1} + \bm{\eta}_{k-1} + \bm{\xi}_{k-1})^{\mathsf{T}} \bm{\Sigma}^{-1} (\bm{Ce}_{k-1} + \bm{\eta}_{k-1} + \bm{\xi}_{k-1}) \nonumber \\
    =& \hspace{2pt} (\bm{Ce}_k + \bm{\eta}_k + \bm{\xi}_k)^{\mathsf{T}} \bm{\Sigma}^{-1} (\bm{Ce}_k + \bm{\eta}_k + \bm{\xi}_k) - z_{k-1}. \nonumber
\end{align}
\normalsize

In order for an attacker to not trigger the alarm $\zeta_k^M = 1$ at time $k$, i.e. a zero-alarm attack, the sensor attack vector must maintain the test measure difference to satisfy $|d_k| \leq \tau_d$. For an attack vector sequence and the designed variance-gamma distribution threshold $\tau_d$, we define a suitable vector
\begin{equation} \label{eq:suitable_vector}
    \bm{\delta}_k = \{ \bm{\delta}_k \in \R^s : | \bm{\delta}_k^{\mathsf{T}} \bm{\delta}_k - z_{k-1} | \leq \tau_d \},
\end{equation}
that leads to the test measure difference $d_k$ not triggering an alarm. Therefore, for any time $k$, the attack vector follows
\begin{equation} \label{eq:mag_based_attack_vector}
    \bm{\xi}_k = - \bm{Ce}_k - \bm{\eta}_k + \bm{\Sigma}^{\frac{1}{2}} \bm{\delta}_k, 
\end{equation}
where $\bm{\Sigma}^{\frac{1}{2}}$ is the symmetric square root of the residual covariance matrix \eqref{eq:Residual_Covariance}, such that 
\begin{equation}
    |d_k| = |z_k - z_{k-1}| \leq \tau_d,
\end{equation}
is satisfied. To remain hidden from detection, an attacker must trigger alarms at a rate which the system is expecting. For the case of hidden attacks to evade detection of our serial monitor for magnitude-based detection, a suitable vector $\bm{\delta}_k$ in \eqref{eq:suitable_vector} is constructed as
\begin{equation} \label{eq:hidden_suitable vector}
    \PP( |d_k| > \tau_d ) = \PP( | \bm{\delta}_k^{\mathsf{T}} \bm{\delta}_k - z_{k-1} | > \tau_d) \approx \psi^M_{des},
\end{equation}
to emulate the alarm rate that would be seen during nominal conditions. More specifically, an observed estimated alarm rate computed in \eqref{eq:MRE_algorithm} must remain within detection bounds found in \eqref{eq:lemma_magnitude_bounds} to remain undetected. To ensure detection does not occur for the magnitude-based alarm rate, an attacker must design the attack vector such that the alarm rate remains within detection bounds, $\hat{\psi}_k^M \in [\Omega_-,\Omega_+]$. To remain below the upper detection bound, the vector $\bm{\delta}_k$ follows
\begin{equation} \label{eq:worstcase_magnitude_alarm_upper}
    | \bm{\delta}_k^{\mathsf{T}} \bm{\delta}_k - z_{k-1} | \leq \tau_d \hspace{1pt} : \hspace{2pt} \Big( \Omega_+ - \hat{\psi}_{k-1}^M - \frac{1 - \hat{\psi}_{k-1}^M }{\ell} \Big) < 0,
\end{equation}
to guarantee $\hat{\psi}_k^M \leq \Omega_+$. Additionally, a requirement to remain above the lower detection bound adheres to
\begin{equation} \label{eq:worstcase_magnitude_alarm_lower}
    | \bm{\delta}_k^{\mathsf{T}} \bm{\delta}_k - z_{k-1} | > \tau_d \hspace{1pt} : \hspace{2pt} \Big( \Omega_- - \hat{\psi}_{k-1}^M + \frac{ \hat{\psi}_{k-1}^M }{\ell} \Big) > 0.
\end{equation}

\subsection{Sign-based Detection}
\label{sec:Sign_based_detection}

We continue with a scenario for an attacker to evade detection from the Serial Detector in which the attack design is required to satisfy signed randomness throughout the sequence. Similar to the magnitude-based detection in Section \ref{sec:mag_based_detection}, the attack sequence must result in alarm rates that emulate attack-free conditions to remain hidden from detection. To achieve this, the sign-based alarm rate satisfies
\begin{equation} \label{eq:expected_sign_switching}
\begin{split}
    & \hspace{-4pt} \PP \big( \mathrm{sgn}(d_k) = -\mathrm{sgn}(d_{k-1}) \big) = \\
    & \hspace{-4pt} \PP \big( \mathrm{sgn}(z_k \hspace{-1pt} - \hspace{-1pt} z_{k-1}) = -\mathrm{sgn}(z_{k-1} \hspace{-1pt} - \hspace{-1pt} z_{k-2}) \big) \hspace{-1pt} \approx \E[\psi^S],
\end{split}
\end{equation}
in order to behave similarly to nominal conditions. In order to not cause a sign switching condition, i.e. signed-based alarm $\zeta_k^S = 0$, the sign of $d_k$ must consist of the same sign as $d_{k-1}$. In terms of the vector $\bm{\delta}_k$ while leveraging \eqref{eq:zero_alarm_attack}, the following inequality
\begin{equation}
\begin{split} \label{eq:vector_no_switching}
    \left\{ \begin{array}{ll}
	\bm{\delta}_k^{\mathsf{T}} \bm{\delta}_k > z_{k-1}, & \hspace*{2pt} \text{if } d_{k-1} > 0, \\[1pt]
    \bm{\delta}_k^{\mathsf{T}} \bm{\delta}_k < z_{k-1}, & \hspace*{2pt} \text{if } d_{k-1} < 0, \\
    \end{array} \right.
\end{split}
\end{equation} 
must be satisfied to not cause a sign change, thus not triggering an alarm. If the signed component alarm rate $\psi_k^S, \; \forall k$ approaches the upper detection bound, the following equation guarantees $\hat{\psi}_k^S \leq \Psi_+$, where
\begin{equation} \label{eq:worstcase_sign_alarm_upper}
\small
    \mathrm{sgn}(\bm{\delta}_k^{\mathsf{T}} \bm{\delta}_k - z_{k-1}) = \mathrm{sgn}(d_{k-1}) \hspace{1pt} : \hspace{2pt} \Big( \Psi_+ - \hat{\psi}_{k-1}^S - \frac{1 \hspace{-1pt} - \hspace{-1pt} \hat{\psi}_{k-1}^S }{\ell} \hspace{-1pt} \Big) \hspace{-1pt} < 0,
\end{equation}
\normalsize
thus maintaining the alarm rate within bounds. Similarly, the requirement to not cross below the lower bound adheres to
\begin{equation} \label{eq:worstcase_sign_alarm_lower}
\small
    \mathrm{sgn}(\bm{\delta}_k^{\mathsf{T}} \bm{\delta}_k - z_{k-1}) = -\mathrm{sgn}(d_{k-1}) \hspace{1pt} : \hspace{2pt} \Big( \Psi_- - \hat{\psi}_{k-1}^S \hspace{-1pt} + \hspace{-1pt} \frac{ \hat{\psi}_{k-1}^S }{\ell} \hspace{-1pt} \Big) \hspace{-1pt} > 0,
\end{equation}
\normalsize
to remain undetectable from the Serial Detector.

\end{section}
\begin{section}{Results} \label{sec:Results}

The proposed Serial Detector was validated in simulation and compared to state-of-the-art detection techniques: Bad-Data (BD) \cite{BadData}, Cumulative Sum (CUSUM) \cite{CUSUM_Journal}, and Cumulative Sign (CUSIGN) \cite{Paul_IFAC} detectors. 
The case study presented in this paper is an autonomous differential-drive UGV with the following linearized model \cite{vehiclemodel}
\begin{equation}
\begin{split}
\label{eq:UGV_dynamics}
    \dot{v} &= \frac{1}{m}(F_l+F_r-B_rv), \\
    \dot{\omega} &= \frac{1}{I_z}\Big(\frac{w}{2}(F_l-F_r)-B_l\omega \Big), \text{ } \dot{\theta} = \omega,
\end{split}
\end{equation}
where $v$, $\theta$, and $\omega$ denote velocity, vehicle heading angle, and angular velocity, forming the state vector $\bm{x} = [v,\theta,\omega]^{\mathsf{T}}$. $F_l$ and $F_r$ describe the left and right input forces from the wheels, $w$ is the vehicle width, while $B_r$ and $B_l$ are resistances due to the wheels rolling and turning. Two sensors ($s = 2$) receive measurements of the states $x_1 = v$ and $x_2 = \theta$ with a sampling rate $t_s = 0.01$.

We perform two different attack sequences: \textit{Bias Attack} where the attack sequence concentrates the test measure distribution such that the magnitude detectors (BD and CUSUM) trigger alarms at a desired rate while signed behavior monitored by CUSIGN remains consistent whereas a \textit{Pattern Attack} creates patterned concentrations on the chi-square test measure difference $d_{k}$. In Fig. \ref{fig:Attack_dists}, the resulting distributions for each case are shown that include: (a) the \textit{No Attack} case where $z_k \sim \chi^2(s=2)$, (b) \textit{Bias Attack}, and (c) \textit{Pattern Attack}. 
Both the Bad-Data and CUSUM detectors are tuned for a desired alarm rate of $\alpha = 0.20$ (see \cite{BadData,CUSUM_Journal}) and the CUSIGN detector has an expected alarm rate of $0.0833$ (see \cite{Paul_IFAC}). The magnitude component of our proposed Serial Detector is tuned for an expected alarm rate $ \E[\psi^M] = 0.20$ and the expected alarm rate for the signed component is $\E[\psi^S] = \frac{2}{3}$. All detectors employ detection bounds that are $3$ standard deviations from their expectation.

\begin{figure}[htb!]
\vspace{-3pt}
\begin{tabular}{ccc}
\hspace{-10pt} \subfigure[\label{fig:dist1} ]{\includegraphics[width = 0.155\textwidth]{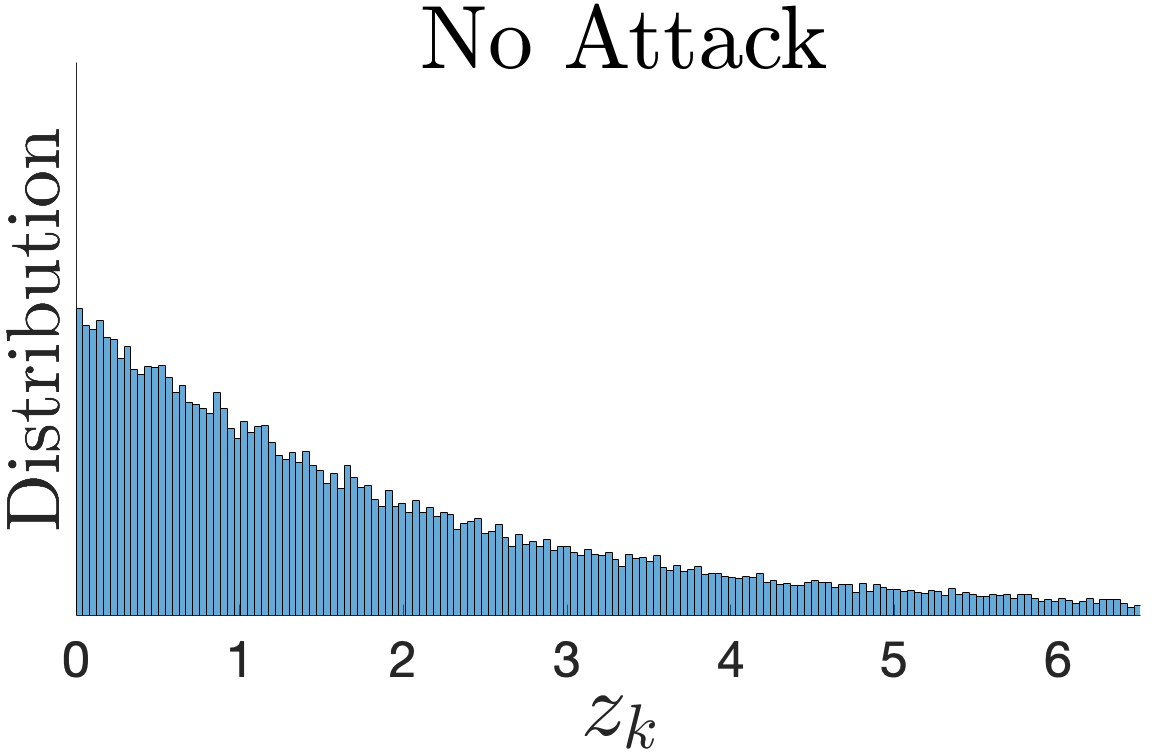}} &
\hspace{-10pt} \subfigure[\label{fig:dist2} ]{\includegraphics[width = 0.155\textwidth]{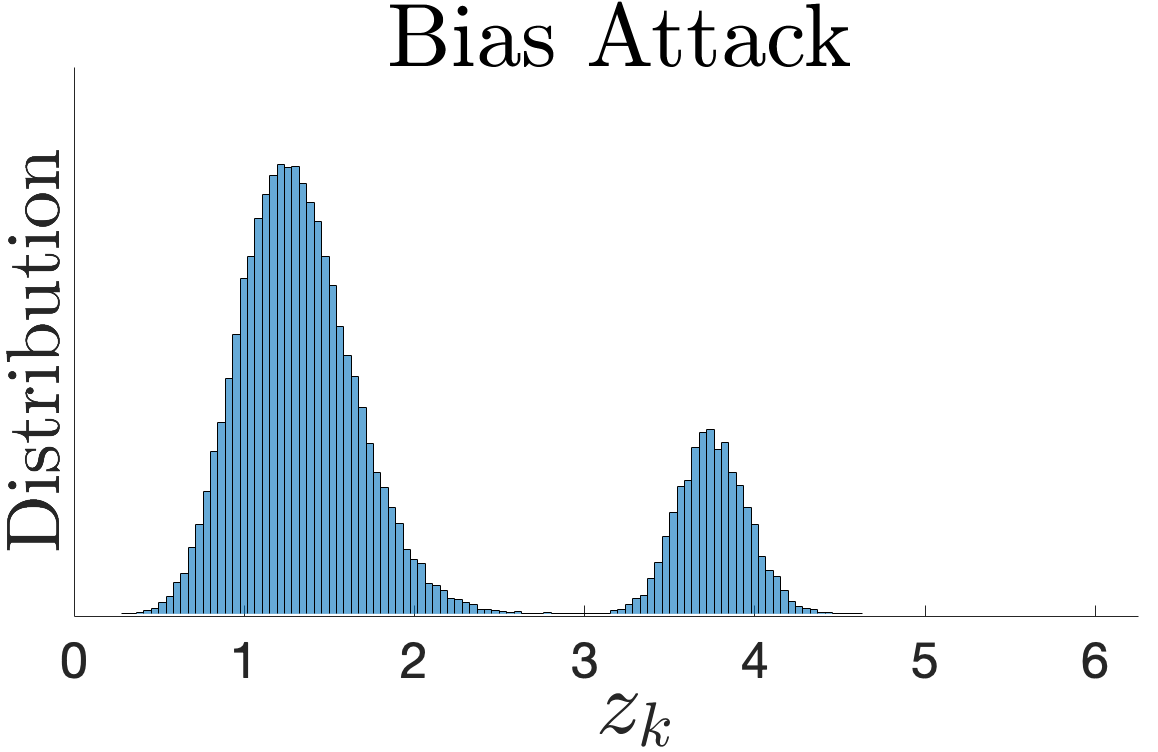}} &
\hspace{-10pt} \subfigure[\label{fig:dist3} ]{\includegraphics[width = 0.155\textwidth]{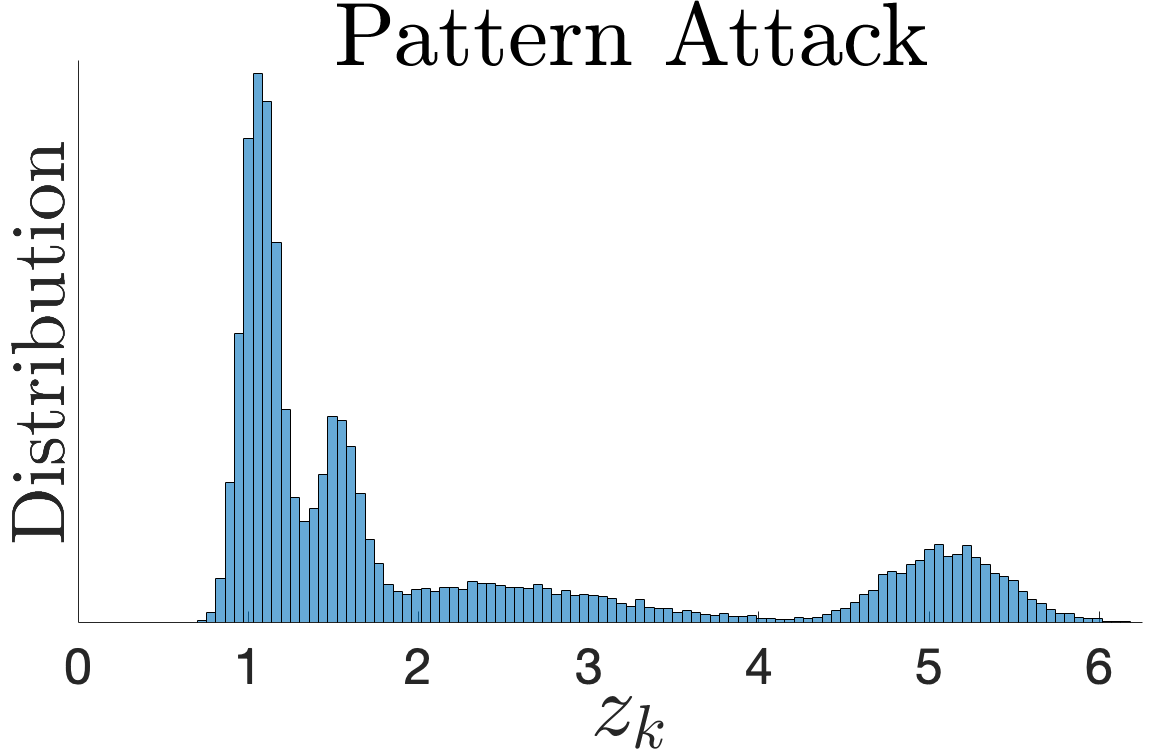}}
\end{tabular}
\vspace{-12pt}
\caption{The test measure $z_k$ distributions when (a) \textit{No attack}, (b) \textit{Bias Attack}, and (c) \textit{Pattern Attack} occur in the simulation case study.}
\label{fig:Attack_dists}
\end{figure}
\vspace{-6pt}

Next, we include a simulation showing the detector alarm rates during \textit{No Attack} at times $k < 20000$, \textit{Bias Attack} at $20000 \leq k < 40000$ and \textit{Pattern Attack} beginning at time step $k \geq 40000$. During the \textit{Bias Attack} in Fig. \ref{fig:CompleteAttack}, the attack fools the BD, CUSUM, and CUSIGN detectors, but the magnitude component of the serial monitor notices the change in the test measure sequence due to the attack. The sign component does not detect the attack, as a bias attack does not disrupt the change of signed behavior of the test measure difference $d_k$. The \textit{Pattern Attack}, while preserving expected test measure difference magnitude behavior, interferes with the expected sign switching rate of the test measure difference. As expected, in the absence of sensor attacks where $k < 20000$, alarm rates for all detection procedures have distributions centered at their expectations. While these modeled attack sequences are primitively designed examples that can fool comparative detectors (e.g. BD, CUSUM, and CUSIGN detectors), the Serial Detector is able to exploit hidden behaviors to strengthen detection capabilities.

\begin{figure}[htb!]
\vspace{-3pt}
\begin{tabular}{c}
\hspace{-5pt}\subfigure[\label{fig:mag_based}]{\includegraphics[width = 0.48\textwidth]{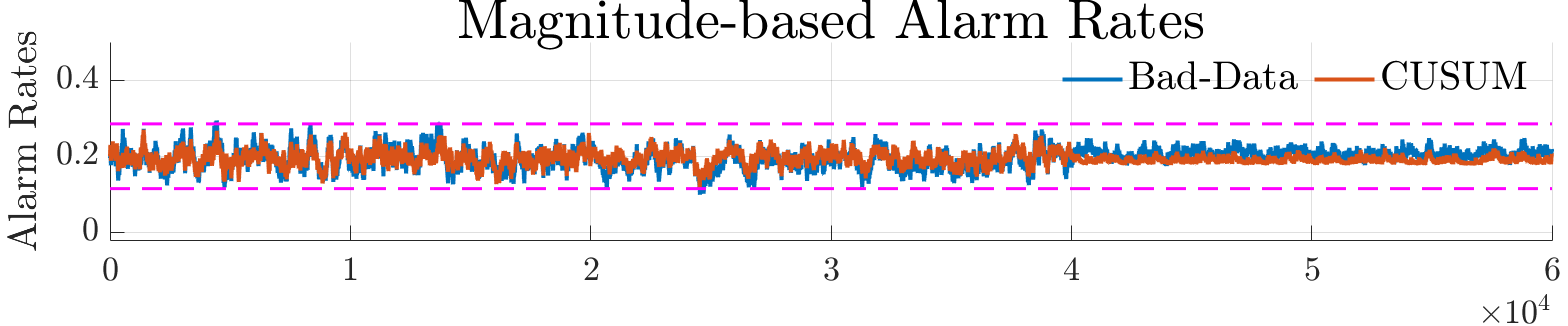}} \\[-2pt]
\hspace{-5pt}\subfigure[\label{fig:rand_based}]{\includegraphics[width = 0.48\textwidth]{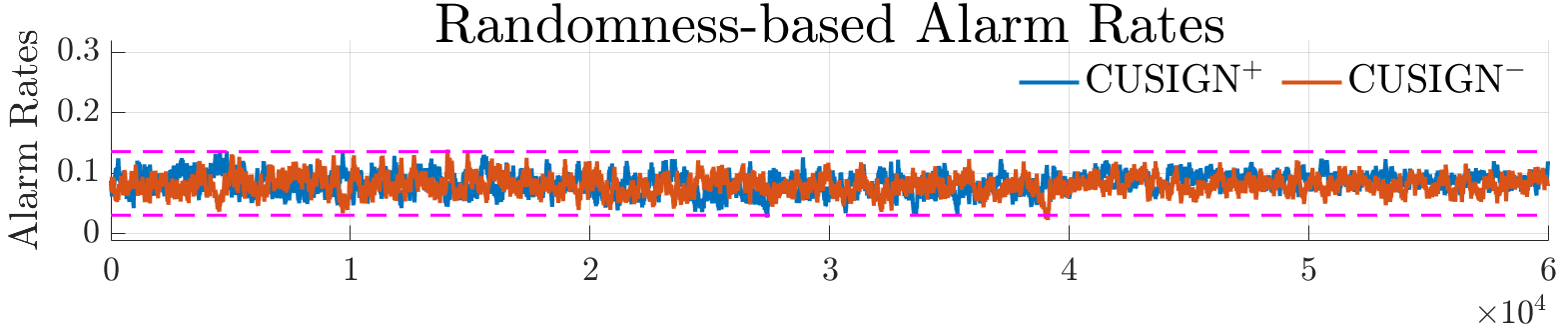}} \\[-2pt]
\hspace{-5pt}\subfigure[\label{fig:serial}]{\includegraphics[width = 0.48\textwidth]{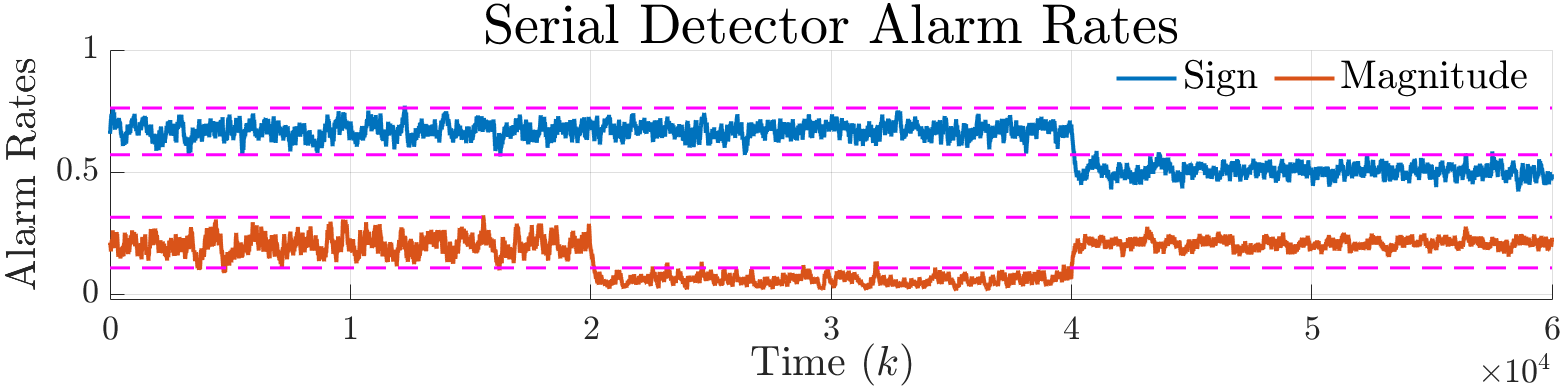}}
\end{tabular}
\vspace{-13pt}
\caption{Resulting alarm rates during the \textit{No Attack}, \textit{Pattern Attack}, and \textit{Bias Attack}. During both the attack scenarios, the comparable detectors (BD, CUSUM, and CUSIGN) are fooled, while the magnitude and sign components of the Serial Detector discover the \textit{Bias} and \textit{Pattern} attacks. Dashed magenta lines represent $3\sigma$ detection bounds for each detector.}
\label{fig:CompleteAttack}
\end{figure}
\vspace{-10pt}

\end{section}
\begin{section}{Conclusions} 
\label{sec:conclusion}

In this paper we have proposed the Serial Detector to discover inconsistent test measure behavior due to hidden cyber-attacks while employing a chi-square detection procedure. Our detection approach monitors the magnitude and signed sequence of the test measure differences to detect inconsistent behavior. We characterized the expected alarm rates for both magnitude and sign, which are dependent on the system model. Furthermore, we provide bounds on detection while also providing an analysis of the detection bounds of our scheme. The proposed approach was validated through simulations on a UGV case study. While our proposed Serial Detector can not replace traditional test measure-based detection schemes, however, it can provide another layer of security to detect hidden attacks that are deceptive to these state-of-the-art detectors.

\section*{Acknowledgments} 
This work is based on research sponsored by ONR under agreement number N000141712012, and NSF under grant \#1816591.

\end{section}

\bibliographystyle{IEEEtran}
\bibliography{ms}

\end{document}